\newtheorem{lemma}{Lemma}
\newtheorem{theorem}{Theorem}
\newtheorem{fact}{Fact}
\newcommand{\D}{\displaystyle}
\newcommand{\RR}{\mathbb{R}}
\newcommand{\Opt}{\textup{OPT}}
\newcommand{\rounddown}[1]{\lfloor #1 \rfloor}
\newcommand{\roundup}[1]{\lceil #1 \rceil}
\newcommand{\tO}[1]{\tilde{O} \left(  #1 \right)}
\renewcommand{\exp}[1]{\text{exp}\left( #1 \right)}
\DeclareMathOperator{\poly}{poly}
\title{Massively Parallel Maximum Coverage Revisited \footnote{The conference version of this manuscript is to appear in the 50th International Conference on Current Trends in Theory and Practice of Computer Science.}}
\author{Thai Bui\thanks{tbui8182@sdsu.edu, San Diego State University. Supported by NSF Grant No. 2342527.} ,  Hoa T. Vu\thanks{hvu2@sdsu.edu, San Diego State University. Supported by NSF Grant No. 2342527.}}
\date{}
\begin{document}

\maketitle

\begin{abstract}
    We study the maximum set coverage problem in the massively parallel model. In this setting, $m$ sets that are subsets of a universe of $n$ elements are distributed among $m$ machines. In each round, these machines can communicate with each other, subject to the memory constraint that no machine may use more than $\tO{n}$ memory. The objective is to find the $k$ sets whose coverage is maximized. We consider the regime where $k = \Omega(m)$  (i.e., $k = m/100$), $m = O(n)$, and each machine has $\tO{n}$ memory \footnote{The input size is $O(mn)$ and each machine has the memory enough to store a constant number of sets.}.

    Maximum coverage is a special case of the submodular maximization problem subject to a cardinality constraint. This problem can be approximated to within a $1-1/e$ factor using the greedy algorithm, but this approach is not directly applicable to parallel and distributed models. When $k = \Omega(m)$, to obtain a $1-1/e-\epsilon$ approximation, previous work either requires $\tO{mn}$ memory per machine which is not interesting compared to the trivial algorithm that sends the entire input to a single machine, or requires $2^{O(1/\epsilon)} n$ memory per machine which is prohibitively expensive even for a moderately small value $\epsilon$.

    Our result is a randomized $(1-1/e-\epsilon)$-approximation algorithm that uses 
    \[O(1/\epsilon^3 \cdot \log m \cdot (\log (1/\epsilon) + \log m))\] rounds. Our algorithm involves solving a slightly transformed linear program of the maximum coverage problem using the multiplicative weights update method, classic techniques in parallel computing such as parallel prefix, and various combinatorial arguments. 

\end{abstract}

\section{Introduction}

Maximum coverage is a classic NP-Hard problem.  In this problem, we have $m$ sets $S_1,S_2,\ldots,S_m$ that are subsets of a universe of $n$ elements $[n] = \{1,2,\ldots,n\}$. The goal is to find $k$ sets that cover the maximum number of elements. In the offline model, the greedy algorithm achieves a $1-1/e$ approximation and assuming $\textup{P} \neq \textup{NP}$, this approximation is the best possible in polynomial time \cite{Feige98}.

However, the greedy algorithm for maximum coverage and the related set cover problem  is not friendly to  streaming, distributed, and massively parallel computing. A large body of work has been devoted to designing algorithms for these problems in these big data computation models. An incomplete list of work includes \cite{MV19,IndykV19,MTV21,Khanna0A23,AssadiKL21,IndykMRUVY17,AssadiK18,ChakrabartiMW24,Har-PeledIMV16,CervenjakGUW24,WarnekeCW23,SahaG09,Assadi17,JaudWC23}. 

Some example applications of maximum coverage includes facility and sensor placement \cite{KrauseG07}, circuit layout and job scheduling \cite{hochbaum1998analysis}, information retrieval \cite{Anagnostopoulos15}, market design \cite{KempeKT15}, data summarization \cite{SahaG09}, and social network analysis \cite{JaudWC23}.

\paragraph{The MPC model.} We consider the massively parallel computation model (MPC) in which $m$ sets $S_1,S_2,\ldots,S_m \subseteq [n]$ are distributed among $m$ machines. Each machine has memory $\tilde{O}(n)$ and holds a set. In each round, each machine can communicate with others with the constraint that no machine receives a total message of size more than $\tilde{O}(n)$. Similar to previous work in the literature, we assume that $m \leq n$. 

The MPC model, introduced by Karloff, Suri, and Vassilvitskii \cite{KarloffSV10} is an abstraction of various modern computing paradigms such as MapReduce, Hadoop, and Spark. 

\paragraph{Previous work.} This problem is a special case of submodular maximization subject to a cardinality constraint. The results of Liu and Vondrak \cite{LiuV19}, Barbosa et al. \cite{BarbosaENW16}, Kumar et al. \cite{KumarMVV15} typically require that each machine has enough memory to store $O(\sqrt{km})$  items which are sets in our case (and storing a set requires $\tO{n}$ memory) with $\sqrt{m/k}$ machines. When $k = \Omega(m)$ (e.g., $k = m/100$), this means that a single machine may need $\tO {m n}$ memory. This is not better than the trivial algorithm that sends the entire input to a single machine and solves the problem in 1 round.

Assadi and Khanna gave a randomized $1-1/e-\epsilon$ approximation algorithm in which each machine has  $\tO{m^{\delta/\epsilon} n}$ memory and the number of machines is $m^{1-\delta/\epsilon}$ for any $\epsilon, \delta \in (0,1)$  (see Corollary 10 in the full paper of \cite{AssadiK18}). Setting $\delta = \Theta(1/\log m)$ gives us a $1-1/e-\epsilon$ approximation in $O(1/\epsilon \cdot \log m)$ rounds with $O(m)$ machines each of which uses $\tO{2^{1/\epsilon} n}$ memory. While Assadi and Khanna's result is nontrivial in this regime, the dependence on $\epsilon$ is exponential and if $n$ is large, then even a moderately small value of $\epsilon=0.01$ can lead to a prohibitively large memory requirement $\approx 2^{100}n$. Their work however can handle the case where $k = o(m)$.

\paragraph{Our result.} We present a relatively simple randomized algorithm that achieves a $1-1/e-\epsilon$ approximation in  $O(1/\epsilon^3 \cdot \log m \cdot (\log (1/\epsilon) + \log m))$ rounds with  $\tO{n}$ memory per machine assuming $k = \Omega(m)$. Our space requirement does not depend on $\epsilon$ compared to the exponential dependence in Assadi and Khanna's result. 

We note that assuming $k = \Omega(m)$ does not make the problem any easier since there are still exponentially many solutions to consider. In practice, one can think of many applications where one can utilize a constant fraction of the available sets (e.g., $10\%$ or  $20\%$). We state our main result as a theorem below.

\begin{theorem}\label{thm:main}
    Assume $k = \Omega(m)$ and there are $m$ machines each of which has $\tO{n}$ memory. There exists an algorithm that with high probability finds $k$ sets that cover at least $(1-1/e-\epsilon)\Opt$ elements in $O(1/\epsilon^3 \cdot \log m \cdot (\log (1/\epsilon) + \log m))$ rounds.
\end{theorem}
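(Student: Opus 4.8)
The plan is to solve the natural LP relaxation of maximum coverage to within a $(1-\epsilon)$ factor inside the MPC model, and then round the fractional solution, paying the unavoidable $1-1/e$ loss. With a variable $x_i\in[0,1]$ for the ``fraction'' of set $S_i$ taken and $z_j\in[0,1]$ for the ``fraction'' of element $j$ covered, the relaxation is to maximize $\sum_j z_j$ subject to $z_j\le\sum_{i:\,j\in S_i}x_i$ for every $j$, $\sum_i x_i\le k$, and $x,z\ge 0$; its optimum is at least $\Opt$. I would rely on the standard fact that a fractional solution of value $v$ can be turned into an integral choice of $k$ sets covering at least $(1-1/e)v$ elements, via the multilinear extension of the coverage function (which is at least $(1-1/e)$ times the LP objective) together with swap/pipage rounding on the uniform matroid. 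The difficulty is that the constraint matrix has $\Theta(mn)$ nonzeros and cannot sit on one machine --- which is precisely why one uses the multiplicative-weights-update (MWU) framework, as it only ever needs to evaluate the current candidate against the constraints, not to store them.

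Concretely, I would keep a weight vector over the (effective) covering constraints and the running fractional solution $x^{(t)}\in[0,1]^m$ on a single coordinator machine (both have length $O(n)$, so this fits), and implement each MWU iteration as follows: broadcast the weights; each machine $i$ computes the weighted coverage of its own set \emph{locally} and returns this single number to the coordinator; the coordinator evaluates the MWU oracle --- a best fractional point for the current weights, which under $\sum_i x_i\le k$ is essentially ``take the heaviest sets'' --- and folds it into $x^{(t)}$; finally, to refresh the weights it needs $\sum_{i\ni j}x_i^{(t)}$ for every $j$, which it obtains by an $O(\log m)$-depth parallel-prefix / tree aggregation of the relevant sets' indicator vectors. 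This last step is the only communication-heavy part and is the source of the $O(\log m)$ rounds per iteration. The ``slightly transformed'' LP --- peeling off the high-degree elements, which are covered with probability $1-\epsilon$ by \emph{any} $k$-subset once $k=\Omega(m)$, together with a small relaxation of the budget and a bucketing of the remaining elements by degree --- is what I would use to bound the \emph{width} of the residual program by a small polynomial in $1/\epsilon$, so that $O(\epsilon^{-3}\log m)$ MWU iterations suffice; combined with the $O(\log m+\log(1/\epsilon))$ rounds each iteration costs (the $\log(1/\epsilon)$ reflecting the precision to which aggregates must be kept and the granularity at which $\Opt$ is guessed) this gives the claimed $O(\epsilon^{-3}\log m(\log m+\log(1/\epsilon)))$ rounds.

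For the rounding step I would top up the fractional solution to a base of the rank-$k$ uniform matroid (this only increases coverage) and run swap rounding, obtaining an integral choice of exactly $k$ sets. Coverage is monotone submodular, so swap rounding preserves the multilinear extension in expectation and concentrates around it; using the $\Omega(\epsilon m)$ sets we are free to waste when $k=\Omega(m)$ --- equivalently, that $\Opt$ is not too small in this regime, and when it is, the reachable universe is small enough to ship the whole instance to one machine and solve exactly --- this yields a choice of $k$ sets covering at least $(1-1/e-\epsilon)\Opt$ elements with high probability. Since the fractional vector has length $O(n)$ it is rounded on the coordinator in $O(1)$ rounds, followed by one more $O(\log m)$-round aggregation to compute and report the coverage of the output.

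The step I expect to be the main obstacle is making two requirements hold simultaneously: every per-iteration computation must be expressible as ``local work on a single set plus a logarithmic-depth aggregation,'' because no machine may ever see the bipartite set--element incidence, \emph{and} the transformed LP must have width polynomial (not exponential) in $1/\epsilon$ so that the iteration count --- and hence the memory-independent-of-$\epsilon$ guarantee --- goes through; this is exactly where the $k=\Omega(m)$ hypothesis and the combinatorial preprocessing (pruning trivially-covered elements, budget slack, degree bucketing) carry the argument. A secondary technical point is upgrading the swap-rounding concentration from constant to high probability with only an additive $\epsilon\,\Opt$ slack, which again rests on the slack that $k=\Omega(m)$ provides.
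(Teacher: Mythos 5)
Your high-level blueprint (MWU on the coverage LP, then lose $1-1/e$ in rounding) matches the paper, but the two places where the real work happens have genuine gaps. First, the width bound. Your plan is to control the MWU width by ``peeling off the high-degree elements, which are covered with probability $1-\epsilon$ by any $k$-subset once $k=\Omega(m)$,'' plus degree bucketing. That claim is false as stated: an element of degree $f_i$ is covered by \emph{every} $k$-subset only when $f_i > m-k$, and elements of intermediate degree (say $f_i \approx \sqrt{m}$ with $k=m/100$) are neither trivially covered nor low-width, so the peeling/bucketing step does not obviously tame the width, and you give no argument that it does. The paper's transformation is different and does the job cleanly with no degree assumptions: substitute $z_j = 1-y_j$ and divide each element constraint by its frequency $f_i$, so the constraint becomes $x_i/f_i + \sum_{S_j\ni i} z_j/f_i \le 1$; then the per-constraint ``error'' lies in $[-1,1]$ by construction, the oracle reduces to picking the $L$ smallest $p_i=w_i/f_i$ and the $m-k$ smallest $q_j=\sum_{i\in S_j}p_i$, and $O(\epsilon^{-2}\log n)$ iterations suffice. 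Second, the round count. MWU's iteration count scales with the log of the number of constraints, i.e.\ $\log n$, not $\log m$; nothing in your preprocessing reduces the number of element constraints below $n$, so your argument yields $O(\epsilon^{-3}\log n\log m)$ rounds, not the claimed $O(\epsilon^{-3}\log m(\log m+\log(1/\epsilon)))$. The paper closes this by element subsampling (McGregor--Vu): since $k=\Omega(m)$ forces $\Opt=\Omega(n)$ after discarding uncoverable elements, sampling the universe down to $O(m/\epsilon^2)$ elements preserves a $\beta-\epsilon$ approximation, which is exactly where $\log n$ becomes $\log m + \log(1/\epsilon)$. You attribute that factor to bit precision and the granularity of guessing $\Opt$, which does not substitute for this step.

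Your rounding route also diverges from the paper's and is not fully sound as sketched. Swap/pipage rounding on the coordinator is a legitimate alternative to the paper's scheme (independent sampling, Markov plus $O(\epsilon^{-1}\log m)$ repetitions, then trimming $(1+O(\epsilon))k$ sets down to $k$ via a responsibility argument and parallel prefix), and it has the advantage of producing exactly $k$ sets without the trimming lemma. But the high-probability guarantee for swap rounding needs $\Opt\cdot\epsilon^2 \gtrsim \log m$, and your fallback for the remaining regime --- ``ship the whole instance to one machine and solve exactly'' --- is problematic: the instance occupies $\Theta(mn')$ bits, which can exceed the $\tO{n}$ memory bound even after universe reduction when $1/\epsilon$ is large, and exact solution is NP-hard anyway (a local greedy or an $O(m\log m)$-round simulated greedy would be the correct patch). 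These are fixable, but as written the concentration/fallback step and, more importantly, the width bound and the $\log n\to\log m$ reduction are missing ideas rather than omitted routine details.
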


If the maximum frequency $f$ (the maximum number of sets that any element belongs to) is bounded, we can drop the assumption that $k= \Omega(m)$, and parameterize the number of rounds based on $f$. In particular, we can obtain a $1-1/e-\epsilon$ approximation in $O(f^3/\epsilon^6 \cdot \log^2 (\frac{kf}{\epsilon}))$ rounds.

\paragraph{Remark.} We could easily modify our algorithm so that each machine uses $\tO{1/\epsilon \cdot n}$ memory and the number of rounds is $O(1/\epsilon^2 \cdot   \log m \cdot (\log (1/\epsilon) + \log m))$.  At least one  $\log m$ factor is necessary based on the lower bound given by Corollary 9 of \cite{AssadiK18}. 

Randomization appears in two parts of our algorithms: the rounding step and the subsampling step to reduce the number of rounds from $\log m \cdot \log n$ to $\log m \cdot (\log(1/\epsilon) + \log m)$. If we only need to compute an approximation to the optimal coverage value such that the output is in the interval $[(1-\epsilon)\Opt,\Opt/(1-1/e-\epsilon)]$, then we have a deterministic algorithm that runs in $O(1/\epsilon^3 \cdot \log n \cdot \log m)$ rounds. The algorithm by Assadi and Khanna \cite{AssadiK18} combines the sample-and-prune framework with threshold greedy.  This strategy requires sampling sets. It is unclear how to derandomize their algorithm even just to compute an approximation to the optimal coverage value.

\paragraph{Our techniques and paper organization.} In Section \ref{sec:main-alg}, we transform the standard linear program for the maximum coverage problem into an equivalent packing linear program that can be solved ``approximately'' by the multiplicative weights update method. At a high level, the multiplicative weights update method gives us a fractional solution that is a $1-1/e-O(\epsilon)$ bi-criteria approximation where $(1+O(\epsilon))k$ ``fractional'' sets cover $(1-1/e-O(\epsilon))\Opt$ ``fractional'' elements. We then show how to find $k$ sets covering $(1-1/e-O(\epsilon))\Opt$ elements from this fractional solution through a combinatorial argument and parallel prefix.

Section \ref{sec:mwu} outlines the details to solve the transformed linear program in the MPC model. While this part is an adaptation of the standard multiplicative weights, an implementation in the MPC model requires some additional details such as the number of bits to represent the weights.  All missing proofs can be found in the appendix.

\paragraph{Preliminaries.} In this work, we will always consider the case where each machine has $\tO{n}$ memory and $m \leq n$. Without loss of generality, we may assume the non-central machine $j$ stores the set $S_j$. For each element $i \in [n]$, we use $f_i$ to denote the number of sets that $i$ is in. This is also referred to as the frequency of $i$. Assume each machine has $\tO{n}$ space. The vector $\textbf{f}$ can be computed in $O(\log m)$ rounds and broadcasted to all machines. Each machine $j$ starts with the characteristic vector $v_j \in \{0,1\}^n$ of the set $S_j$ that it holds. The vector $\textbf{f}$ is just the sum of the characteristic vectors of the sets. We can aggregate the vectors $\{v_j \}$ in $O(\log m)$ rounds using the standard binary tree aggregation algorithm. 

Since in this work, the dependence on $1/\epsilon$ is polynomial, an $\alpha - O(\epsilon)$ approximation can easily be translated to an $\alpha  - \epsilon$ approximation by scaling $\epsilon$ by a constant factor. We can also assume that $1/\epsilon < n/10$; otherwise, we can simulate the greedy algorithm in $O(1/\epsilon)$ rounds. For the sake of exposition, we will not attempt to optimize the constants in our algorithm and analysis.  Finally, in this work we consider $1-1/\poly(m)$ as a ``high probability''.  We use $[E]$ to denote the indicator variable of the event $E$ that is 1 if $E$ happens and 0 otherwise.


\section{Algorithm}

\subsection{The main algorithm} \label{sec:main-alg}

\paragraph{Linear programming (re)formulation.} We first recall the relaxed linear program (LP) for the maximum coverage problem $\Pi_0$:

    \begin{align*}
                \text{maximize} \quad               & \sum_{i \in [n]} x_i                                   \\
                \text{(s.t.)\quad}      &  x_i \le \sum_{S_j \ni i} y_j      &   & \forall i \in [n]\\
                                        & \sum_{j \in [m]} y_j = k        &   &\\
                                        & x_i, y_j \in [0,1]                &   & \forall i \in [n], j \in [m].  
    \end{align*}
    
We first reformulate this LP and then approximately solve the new LP using the multiplicative weights update method \cite{AroraHK12}. For each $j \in [m]$, let $z_j := 1 -  y_j$. We have the following fact.

    \begin{fact}\label{fact:equiv-lp}
        For each $i \in [n]$, $x_i \le \D\sum_{S_j \ni i} y_j \iff x_i + \sum_{S_j \ni i} z_j \le \sum_{S_j \ni i} (y_j+z_j) = f_i.$
    \end{fact}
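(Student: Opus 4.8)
The plan is essentially a one-line substitution together with an appeal to the definition of $f_i$. First I would recall that $z_j := 1 - y_j$ for every $j \in [m]$, so that $y_j + z_j = 1$ identically. Summing this identity over all sets containing a fixed element $i$ gives $\sum_{S_j \ni i}(y_j + z_j) = \sum_{S_j \ni i} 1 = f_i$, where the last equality is just the definition of the frequency $f_i$ as the number of sets in which $i$ appears. This establishes the rightmost equality in the statement.

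For the equivalence, I would observe that the quantity $\sum_{S_j \ni i} z_j$ is a fixed real number once the $y_j$'s (equivalently the $z_j$'s) are fixed, and adding the same quantity to both sides of an inequality preserves it in both directions. Concretely, $x_i \le \sum_{S_j \ni i} y_j$ holds if and only if $x_i + \sum_{S_j \ni i} z_j \le \sum_{S_j \ni i} y_j + \sum_{S_j \ni i} z_j$, and by the previous paragraph the right-hand side equals $\sum_{S_j \ni i}(y_j + z_j) = f_i$. Chaining the two displays yields the claimed equivalence. I do not expect any real obstacle here; the only points requiring care are using the substitution $z_j = 1 - y_j$ and the definition of $f_i$ correctly. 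The purpose of stating this fact is simply to license rewriting the covering constraints of $\Pi_0$ in the packing form that the multiplicative weights update method can handle in the next step.
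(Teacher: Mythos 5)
Your argument is correct and is exactly the intended one: the paper treats this fact as immediate from the substitution $z_j = 1 - y_j$ (so $y_j + z_j = 1$, summing to $f_i$ over the $f_i$ sets containing $i$) together with adding $\sum_{S_j \ni i} z_j$ to both sides, and gives no separate proof. Nothing is missing.
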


Note that if $\textbf{y} \in [0,1]^m$ and $\sum_{j} y_j = k$, then $\textbf{z} \in [0,1]^m$ and $\sum_{j} z_j = m-k$. Thus, it is not hard to see that the original LP is equivalent to the following LP which we will refer to as $\Pi_1$.

    \begin{align*}
            \text{maximize} \quad               & \sum_{i \in [n]} x_i                                   \\
            \text{(s.t.)\quad}      & \frac{x_i}{f_i} + \frac{1}{f_i}\cdot\sum_{S_j \ni i} z_j \le 1      &   & \forall i \in [n]\\
                                    & \sum_{j \in [m]} z_j = m-k        &   &\\
                                    & x_i, z_j \in [0,1]                &   & \forall i \in [n], j \in [m].  
    \end{align*}

In this section, we will assume the existence an MPC algorithm that approximately solves the linear program $\Pi_1$ in $O (1/\epsilon^3 \cdot \log n \cdot \log m)$ rounds. The proof will be deferred to Section \ref{sec:mwu}.
    
\begin{theorem}\label{thm:solving-mkc-lp}
    There is an algorithm that  finds $\textbf{x} \in [0,1]^n, \textbf{z} \in [0,1]^m$ such that 
    \begin{enumerate}
        \item $\sum_{i \in [n]} x_i \geq (1-\epsilon) \Opt$,
        \item $\sum_{j \in [m]} z_j = m-k$, and
        \item 
        \[
            \frac{x_i}{f_i} + \frac{1}{f_i}\cdot\sum_{S_j \ni i} z_j \le 1 + \epsilon \ \ \, \forall i \in [n]
        \]
    \end{enumerate}
     
    in $O(\epsilon^{-3} \log n \cdot \log m)$ rounds.
\end{theorem}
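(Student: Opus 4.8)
The plan is to solve $\Pi_1$ by the multiplicative weights update (MWU) framework for packing/covering LPs. The key observation is that $\Pi_1$ has a clean structure: the $n$ "hard" constraints are $\frac{x_i}{f_i} + \frac{1}{f_i}\sum_{S_j \ni i} z_j \le 1$, which are packing-type (nonnegative coefficients on the left, a constant $1$ on the right), while $\sum_j z_j = m-k$ and the box constraints $x_i, z_j \in [0,1]$ will be kept as "easy" constraints handled exactly by an oracle. I would first guess the optimal value $\Opt$ by binary search (or by trying all powers of $(1+\epsilon)$ in $[1,n]$, which only costs an $O(\log n / \log(1+\epsilon)) = O(\epsilon^{-1}\log n)$ factor that is absorbed; actually, since we only want a bicriteria solution I would instead fix a target value $V$ and ask the feasibility question "is there $\mathbf{x},\mathbf{z}$ in the easy polytope with $\sum_i x_i \ge V$ and all $n$ packing constraints satisfied?"). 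For a fixed target $V$, MWU maintains a weight $w_i \ge 0$ for each packing constraint $i \in [n]$, and in each iteration calls an oracle that, given the weights, either returns a point of the easy polytope $P = \{(\mathbf{x},\mathbf{z}) : \sum_j z_j = m-k,\ \mathbf{x}\in[0,1]^n,\ \mathbf{z}\in[0,1]^m,\ \sum_i x_i \ge V\}$ satisfying the single averaged constraint $\sum_i w_i\big(\frac{x_i}{f_i} + \frac{1}{f_i}\sum_{S_j\ni i} z_j\big) \le \sum_i w_i$, or certifies infeasibility. The standard width/iteration analysis then gives that after $O(\rho^2 \log n / \epsilon^2)$ iterations — where $\rho$ is the width, the maximum value any single packing constraint can attain over $P$ — the average of the returned points satisfies every packing constraint up to an additive $\epsilon$, while satisfying the easy constraints exactly (so $\sum_i x_i \ge V \ge (1-\epsilon)\Opt$ after we choose $V$ appropriately near $\Opt$, and $\sum_j z_j = m-k$ exactly).

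The oracle step is where the problem-specific work lies, and I expect it to be the main obstacle — both in bounding the width and in implementing it in $\tO{n}$ memory per machine in $O(\log m)$ rounds. Given the weights $w_i$, the averaged constraint is linear in $(\mathbf{x},\mathbf{z})$ with coefficient $w_i/f_i$ on $x_i$ and $\big(\sum_{i \in S_j} w_i/f_i\big)$ on $z_j$; call the latter $c_j$. So the oracle wants to maximize a linear functional over $P$, or equivalently minimize $\sum_i (w_i/f_i) x_i + \sum_j c_j z_j$ subject to $\sum_i x_i \ge V$, $\sum_j z_j = m-k$, and box constraints. This decomposes: the $x$-part is "put mass $V$ on the coordinates with smallest $w_i/f_i$" (a fractional threshold / prefix selection), and the $z$-part is "put mass $m-k$ on the $j$'s with smallest $c_j$" (again a fractional prefix selection among $m$ values). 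Both are exactly the kind of selection that parallel prefix handles in $O(\log m)$ rounds; computing each $c_j = \sum_{i \in S_j} w_i/f_i$ is a local computation on machine $j$ once the weight vector $\mathbf{w}$ (and $\mathbf{f}$) has been broadcast, which fits in $\tO{n}$ memory since each vector has $n$ entries. To control the width $\rho$: for any point of $P$, $\frac{x_i}{f_i} + \frac{1}{f_i}\sum_{S_j\ni i} z_j \le \frac{1}{f_i} + \frac{f_i}{f_i} = \frac{1+f_i}{f_i} \le 2$, so $\rho \le 2$ — a constant. Hence $O(\log n/\epsilon^2)$ iterations suffice, each costing $O(\log m)$ rounds, for a total of $O(\epsilon^{-2}\log n \log m)$ rounds; the extra $\epsilon^{-1}$ factor in the claimed bound comes from the outer search over the target value $V$ (trying $O(\epsilon^{-1}\log n)$ candidate values, or more carefully, a binary search refined to precision $\epsilon$), giving the stated $O(\epsilon^{-3}\log n \log m)$ — wait, that's $\epsilon^{-3}$ only if the search costs $\epsilon^{-1}$, which matches.

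Let me reconsider the round count to make sure it lands on $O(\epsilon^{-3}\log n \log m)$: MWU needs $O(\rho^2 \epsilon^{-2}\log n) = O(\epsilon^{-2}\log n)$ iterations, each iteration is one oracle call costing $O(\log m)$ rounds, so one feasibility test costs $O(\epsilon^{-2}\log n \log m)$ rounds; searching over $O(\epsilon^{-1})$ geometrically spaced targets (since $\Opt \in [1,n]$ we only need the $\log n$ scales, and within a scale $\epsilon$-granularity is $O(\epsilon^{-1})$ — but actually we can just do $O(\epsilon^{-1}\log n)$ multiplicative guesses total, and the $\log n$ is already outside) gives $O(\epsilon^{-3}\log n \log m)$ rounds total. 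The bit-complexity caveat flagged in the paper — weights can get as small as $(1-\epsilon\,\mathrm{stuff})$ to powers, so $O(\epsilon^{-1}\log(n/\epsilon))$ bits — only affects the per-machine memory by a polylog factor, still $\tO{n}$, and doesn't change the round count; I would handle it by rounding the weights to this precision each round and absorbing the induced error into the $\epsilon$ slack. The remaining routine steps are: (i) verifying the standard MWU guarantee gives exactly the three claimed properties (property 2 is exact because $\sum_j z_j = m-k$ is an easy constraint enforced by every oracle response and is preserved by averaging; property 1 because $\sum_i x_i \ge V$ is likewise an easy constraint and we pick $V \ge (1-\epsilon)\Opt$; property 3 is the $\epsilon$-approximate packing feasibility); and (ii) checking the parallel-prefix implementation of the oracle and the broadcast of $\mathbf{w},\mathbf{f}$ fit the memory budget — both straightforward given the preliminaries already established in the excerpt.
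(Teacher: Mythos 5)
Your proposal is essentially the paper's own route: the same MWU setup with the $n$ packing constraints as the ``hard'' constraints, the easy polytope carrying $\sum_j z_j = m-k$, the box constraints and the target on $\sum_i x_i$; the same oracle (minimize the weighted average constraint by putting the mass on the coordinates with smallest $w_i/f_i$, resp.\ smallest $c_j=\sum_{i\in S_j} w_i/f_i$); the same constant-width bound giving $O(\epsilon^{-2}\log n)$ iterations at $O(\log m)$ rounds each.

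Two bookkeeping points deserve attention, since the theorem claims a specific round bound. First, your primary accounting for the outer search is off: running the $O(\epsilon^{-1}\log n)$ multiplicative guesses for the target value \emph{sequentially} multiplies the per-guess cost $O(\epsilon^{-2}\log n\log m)$ by $\epsilon^{-1}\log n$, i.e.\ $O(\epsilon^{-3}\log^2 n\log m)$, not the claimed bound --- the ``$\log n$ is already outside'' step double-uses the same factor. The paper avoids this by exploiting the memory slack: it runs $O(\log n)$ guesses in parallel on each machine (a $\log n$ factor in memory, still $\tO{n}$) and only $O(\epsilon^{-1})$ batches sequentially; your parenthetical binary search over the guesses would also work (feasibility is monotone in the target for $L\le\Opt$, so the search returns some feasible $L\ge \Opt/(1+\epsilon)$) and in fact costs even fewer tests. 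Second, your fix for weight precision --- rounding each $w_i$ to $O(\epsilon^{-1}\log(n/\epsilon))$ bits --- is not a polylog overhead when $1/\epsilon$ is allowed to be polynomial in $n$ (the paper only assumes $1/\epsilon < n/10$), so the ``still $\tO{n}$ memory'' claim is loose; the paper instead never ships the weights themselves but broadcasts the integer sums $\sum_d E_i^{(d)} = \sum_d f_i\cdot\textup{error}_i^{(d)}$, which need only $O(\log n)$ bits since the oracle's outputs are binary, and each machine reconstructs $w_i^{(t)} = 2^{-\epsilon\sum_d E_i^{(d)}/f_i}$ locally (with a separate, analyzed $1/n^5$ truncation slack inside the oracle). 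Both issues are repairable without new ideas, but as written they are the parts of your argument that do not yet deliver the stated $O(\epsilon^{-3}\log n\log m)$ bound within $\tO{n}$ memory.
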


Let $\textbf{x}$ and $\textbf{z}$ be the be the output given by Theorem \ref{thm:solving-mkc-lp}. Then, let $\textbf{x}^{\prime} = \textbf{x}/(1+\epsilon) $, $\textbf{z}^{\prime} = \textbf{z}/(1+\epsilon) $, and $\textbf{y}^{\prime}=\mathbf{1} -\textbf{z}^{\prime}$. We have
    \begin{align}
        \sum_{i=1}^n x^{\prime}_i & = \frac{1}{1+\epsilon} \sum_{i=1}^n x_i \geq \frac{1-\epsilon}{1+\epsilon} \Opt > (1-4\epsilon)\Opt, \\
        \sum_{j=1}^m y^{\prime}_i  & = \sum_{j=1}^m \left( 1- \frac{z_j}{1+\epsilon} \right)  = m - \frac{m-k}{1+\epsilon} \leq m - (1-2\epsilon) (m-k) < k + 2\epsilon m, \\
        x_i' + & \sum_{S_j \ni i} z_j' \leq f_i \iff  x_i'  \leq \sum_{S_j \ni i} y_j', \ \  \forall i \in [n]  \text{, by Fact \ref{fact:equiv-lp}}.
    \end{align}

Thus, by setting $\textbf{x} \leftarrow \textbf{x}/(1+\epsilon) $, and $\textbf{y} \leftarrow \textbf{y}^{\prime}$, we have an approximate solution  $\textbf{x}  \in [0,1]^n, \textbf{y} \in [0,1]^n$ to the LP $\Pi_0$ such that 
\begin{align*}
    \sum_{i=1}^n x_i & \geq  (1-4\epsilon)\Opt, && \sum_{j=1}^m y_j \leq k + 2\epsilon m, \text{ and } \\
    x_i  & \leq \sum_{S_j \ni i} y_j, \forall i \in [n].
\end{align*}

We can then apply the standard randomized rounding to find a sub-collection of at most $k + 2\epsilon m$ sets that covers at least $(1-4\epsilon)\Opt$ elements. For the sake of completeness, we will provide the rounding  algorithm in the MPC model in the following lemma. The proof can be found in Appendix \ref{sec:omitted-proofs}.

\begin{lemma} \label{lem:rounding}
    Suppose $\textbf{x} \in [0,1]^n$ and $\textbf{y} \in [0,1]^m$ satisfy:
    \begin{enumerate}
        \item $\sum_{i \in [n]} x_i \geq L$,
        \item $x_i \leq \sum_{S_j \ni i} y_j$ for all $i \in [n]$,
        \item $\sum_{j \in [m]} y_j = k$,
        \item $x_i, y_j \in [0,1]$ for all $i \in [n]$ and $j \in [m]$.
    \end{enumerate}
    Then there exists a rounding algorithm that finds a sub-collection of $k$ sets that in expectation cover at least $(1-1/e)L$ elements in $O(1)$ round. To obtain a high probability guarantee, the algorithm requires $O(1/\epsilon \cdot \log m)$ rounds to find  $k$ sets that cover least $(1-1/e-O(\epsilon))L$ elements.
\end{lemma}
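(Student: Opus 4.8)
For the in-expectation claim I would use the standard LP-rounding for maximum coverage. Perform $k$ independent draws $A_1,\dots,A_k$, where draw $\ell$ selects $S_j$ with probability $y_j/k$ (a valid distribution since $\sum_j y_j = k$), and output $\{A_1,\dots,A_k\}$, padding with arbitrary not-yet-selected sets if fewer than $k$ distinct sets were drawn (such sets exist because $\sum_j y_j = k$ together with $y_j \le 1$ forces $m \ge k$, and padding only increases coverage). For a fixed element $i$, writing $s_i := \sum_{S_j \ni i} y_j \ge x_i$ by condition~2, we have $\prob{i \text{ is uncovered}} = (1-s_i/k)^k \le (1-x_i/k)^k \le e^{-x_i}$, hence the probability that $i$ is covered is at least $1 - e^{-x_i} \ge (1-1/e)x_i$, the last inequality by concavity of $t \mapsto 1-e^{-t}$ on $[0,1]$. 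Summing over $i$ and invoking condition~1, the expected coverage is at least $(1-1/e)\sum_i x_i \ge (1-1/e)L$. In MPC this is $O(1)$ rounds: gather the length-$m$ vector $\textbf{y}$ at the central machine in one all-to-one step (total data $\tO{n}$), have it perform the $k$ draws locally, and broadcast the list of at most $k \le m \le n$ selected indices so that every machine learns whether its set is chosen.

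For the high-probability claim I would repeat the rounding $R = \Theta(\epsilon^{-1} \log m)$ times independently and keep the run of maximum coverage. Both the $R$ selections (performed distributedly by pushing the count $k$ down a fixed binary tree via conditional-binomial splits, so that leaf $j$ learns how many of its draws hit $S_j$) and the $R$ coverage evaluations (bottom-up aggregations of the characteristic vectors of the selected sets) are carried out pipelined along the tree, for a total of $O(R + \log m) = O(\epsilon^{-1}\log m)$ rounds, with each machine handling only one partial union of size $\tO{n}$ per round. It then suffices to show that a single run attains coverage at least $(1-1/e-\epsilon)L$ with probability $\Omega(\epsilon)$, since then the probability that all $R$ runs fail is $(1-\Omega(\epsilon))^R = m^{-\Omega(1)}$. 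Let $C$ denote the coverage of one run and $C_{\max}$ the maximum coverage achievable by any $k$ sets, so that $C \le C_{\max}$ deterministically; then $(1-1/e)L \le \expect{C} \le (1-1/e-\epsilon)L + C_{\max}\cdot\prob{C \ge (1-1/e-\epsilon)L}$, which rearranges to $\prob{C \ge (1-1/e-\epsilon)L} \ge \epsilon L / C_{\max}$, and this is $\Omega(\epsilon)$ whenever $C_{\max} = O(L)$, which holds in the applications of this lemma, where $L = \Theta(\Opt)$ while $C_{\max} \le \Opt$.

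The step I expect to be the main obstacle is precisely this high-probability bound, because a single run genuinely does not concentrate: when the input sets overlap heavily the per-element coverage indicators are positively correlated, and the coverage of one run can be bimodal (picture a large set that is drawn only with small probability), so there is no Chernoff-type statement for one run. The repetition and the $\Omega(\epsilon)$ per-run success probability are therefore both essential, and the delicate point is the bound $C_{\max} = O(L)$, which is why the argument needs the fractional objective $\sum_i x_i$ to be within a constant factor of $\Opt$. A secondary, more routine obstacle is respecting the per-round $\tO{n}$ memory budget across the $R$ parallel coverage evaluations, which is why the aggregations are pipelined through one binary tree rather than run as $R$ simultaneous independent aggregations.
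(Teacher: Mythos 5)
Your proposal follows essentially the same route as the paper: sample $k$ sets independently from the distribution $\{y_j/k\}$, use the standard per-element analysis to get expected coverage $(1-1/e)L$, then repeat $\Theta(\epsilon^{-1}\log m)$ times and keep the best run, arguing that each run succeeds with probability $\Omega(\epsilon)$. The only real divergence is in how that per-run probability is justified: the paper applies Markov's inequality to the number of uncovered elements (asserting its expectation is at most $L/e$), whereas you use a reverse-Markov argument that needs the deterministic bound $C_{\max}=O(L)$, which you explicitly import from the application ($L=\Theta(\Opt)$, and since $k=\Omega(m)$ and every element is covered by some set, $n=O(\Opt)$, so any candidate solution covers $O(L)$ elements). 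This caveat is apt: the lemma's hypotheses alone do not supply it, and the paper's Markov step implicitly relies on the same fact --- since the coverage $C$ may exceed $L$, the quantity ``$L-C$'' is not a nonnegative random variable, and the cleanest reading is Markov applied to $n-C$, which again only yields a per-run success probability $\Omega(\epsilon)$ when $n=O(L)$. So your version makes explicit a condition the paper leaves tacit, and both arguments are valid in the regime where the lemma is actually invoked. Your MPC implementation (distributed binomial-split sampling and pipelined tree aggregations, $O(\epsilon^{-1}\log m)$ rounds) differs only cosmetically from the paper's (the central machine samples all $\Theta(\epsilon^{-1}\log m)$ candidate solutions and the machines evaluate them in $1/\epsilon$ batches of $O(\log m)$ solutions each); both respect the $\tO{n}$ memory budget and the stated round bound.
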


Applying Lemma \ref{lem:rounding} to $\textbf{x}$ and $\textbf{y}$ with $k+2\epsilon m$ in place of $k$, we obtain a sub-collection of at most $k + 2\epsilon m$ sets that covers at least $(1-1/e-O(\epsilon))\Opt$ elements. Since we assume that $k = \Omega(m)$, that means we have found $k + O(\epsilon) k$  sets that cover at least $(1-1/e-O(\epsilon))\Opt$ elements. The next lemma shows that we can find $k$ sets among these that cover at least $(1-1/e-O(\epsilon))\Opt$ elements. The proof is a combination of a counting argument and the well-known parallel prefix algorithm  \cite{LadnerF80}.

    \begin{algorithm}
        \caption{Parallel prefix coverage} \label{alg:prefix-coverage}
        Compute $|S_1|, |S_2 \setminus S_1|, |S_3 \setminus (S_1 \cup S_2)|, \ldots, |S_k \setminus (S_1 \cup \ldots \cup S_{k-1})|$ in $O(\log k)$ rounds. \\
        \SetKwFunction{PrefixCoverage}{PrefixCoverage} 
        \SetKwProg{Fn}{Function}{:}{end} 

        \Fn{\PrefixCoverage{$S_1, S_2, \ldots, S_k$}}{
            \tcp{Compute $|S_1|, |S_2 \cup S_1|, |S_3 \cup S_2 \cup S_1|, \ldots, |S_k \cup S_{k-1} \cup \ldots \cup S_1|$}
            \If{$k = 1$} {
                \Return $|S_1|$.
            }\Else {
                \tcp{Assume $k$ is even.}
                In one round, machine $2j - 1$ sends $S_{2j-1}$ to machine $2j$, then machine $2j$ computes  $Q_{j} =  S_{2j-1} \cup S_{2j}$. \\
                Run \PrefixCoverage{$Q_1, Q_2, \ldots, Q_{k/2}$} on machines $2, 4, 6, \ldots, k$. \\
                Machine $j$ now has $S_1 \cup S_2 \cup S_3 \cup \ldots \cup S_{j}$ for $j = 2, 4, 6, \ldots, k$. \\
                In one round, machine $j = 1,3,5,\ldots, k - 1$ communicates with machine $j-1$ which has $S_1 \cup S_2 \cup \ldots S_{j-1}$ and computes $S_1 \cup S_2 \cup \ldots \cup S_{j}$. \\

                \tcp{If $k$ is odd, run the above algorithm on $S_1, S_2, \ldots, S_{k-1}$ and then compute $S_1 \cup S_2 \cup \ldots \cup S_k$ in one round.}

                Each machine $j$ communicates with machine $j-1$ to compute $|S_1 \cup S_2 \cup \ldots \cup S_j|-|S_1 \cup S_2 \cup \ldots \cup S_{j-1}|$ in one round. \\
                
            }
        }
        
    \end{algorithm}

We rely on the following result which is a simulation of the parallel prefix in our setting.

\begin{lemma}\label{lem:prefix-coverage}
    Suppose there are $k$ sets and machine $j$ holds the set $S_j$. Then Algorithm \ref{alg:prefix-coverage} computes $|S_1|, |S_2 \setminus S_1|, |S_3 \setminus (S_1 \cup S_2)|, |S_4 \setminus(S_1 \cup S_2 \cup S_3)|, \ldots, |S_k \cup S_{k-1} \cup \ldots \cup S_1|$ in $O(\log k)$ rounds.
\end{lemma}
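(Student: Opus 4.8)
The plan is to recognize Algorithm~\ref{alg:prefix-coverage} as the Ladner--Fischer parallel prefix scheme~\cite{LadnerF80} instantiated with the associative operation $\cup$ on subsets of $[n]$, and then to argue that each primitive step of that scheme costs $O(1)$ rounds in the MPC model while respecting the $\tO{n}$ memory bound. First, it suffices to arrange that every machine $j \in [k]$ holds the prefix union $U_j := S_1 \cup S_2 \cup \cdots \cup S_j$, stored as a characteristic vector in $\{0,1\}^n$: given $U_{j-1}$ on machine $j-1$ and $U_j$ on machine $j$, one extra round of communication lets machine $j$ compute $|U_j| - |U_{j-1}| = |S_j \setminus (S_1 \cup \cdots \cup S_{j-1})|$, which is exactly the $j$-th marginal coverage (with the convention $U_0 = \emptyset$). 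Thus the lemma reduces to the correctness and round complexity of the \texttt{PrefixCoverage} routine.

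I would prove correctness by induction on $k$. The base case $k = 1$ is immediate. For even $k$, after the pairing step machine $2j$ holds $Q_j = S_{2j-1} \cup S_{2j}$; since $\cup$ is associative, $Q_1 \cup \cdots \cup Q_j = S_1 \cup \cdots \cup S_{2j} = U_{2j}$, so by the induction hypothesis the recursive call on $Q_1, \ldots, Q_{k/2}$ (run with physical machine $2j$ playing the role of machine $j$) leaves $U_{2j}$ on machine $2j$ for every $j$. Machine $2j-1$ then holds $S_{2j-1}$ and, after one round with machine $2j-2$ (which holds $U_{2j-2}$; for $j=1$ there is nothing to do, as machine $1$ already holds $U_1 = S_1$), computes $U_{2j-2} \cup S_{2j-1} = U_{2j-1}$. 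Hence every machine $j$ ends with $U_j$. The odd-$k$ case follows by running the routine on $S_1, \ldots, S_{k-1}$ and spending one extra round to let machine $k$ form $U_{k-1} \cup S_k$. For the round count, let $R(k)$ denote the number of rounds used by \texttt{PrefixCoverage}: the pairing step, the two fold-in steps, and the final differencing step each take $O(1)$ rounds, and the recursion shrinks $k$ to $\lceil k/2 \rceil$, so $R(k) = R(\lceil k/2 \rceil) + O(1) = O(\log k)$.

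It remains to verify MPC feasibility, which is also where I expect the main (though routine) care to be needed: one must maintain the invariant ``machine $j$ currently holds $U_j$'' through the index relabelling inside the recursion, and check the resource bounds at every level, including the boundary machine $j=1$ and the odd-$k$ fix-up. At each level, an active machine stores only $O(1)$ characteristic vectors, each a subset of $[n]$ and hence of size $\tO{n}$, and in each round it sends and receives $O(1)$ such vectors, so per-machine memory and communication stay within $\tO{n}$; moreover the recursive call at a level runs on the even-indexed machines, which are disjoint from the odd-indexed machines doing the fold-in, and the recursion only ever uses a subset of the current machines, so no machine is asked to participate in two branches simultaneously. Combining the correctness of \texttt{PrefixCoverage}, the $O(\log k)$ round bound, and the final $O(1)$-round differencing step yields the lemma.
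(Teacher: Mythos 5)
Your proposal is correct and follows essentially the same route as the paper's proof: a recursive pairing step so that, by induction and associativity of $\cup$, each machine $j$ ends up holding the prefix union $S_1 \cup \cdots \cup S_j$ in $O(\log k)$ levels, followed by one round of neighbor communication to take cardinality differences. The only additions are your explicit induction and the $\tO{n}$ memory/communication check, which the paper leaves implicit but which match its intent.
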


\begin{proof}
    We first show how to compute $(S_1), (S_1 \cup S_2), (S_1 \cup S_2 \cup S_3), \ldots, (S_1 \cup S_2 \cup \ldots \cup S_k)$ in $O(\log k)$ rounds where machine $j$ holds $S_1 \cup S_2 \cup \ldots \cup S_j$ at the end. Once this is done, machine $j$ can send $S_1 \cup S_2 \cup \ldots \cup S_j$ to machine $j+1$ and machine $j+1$ can compute $|S_1 \cup S_2 \cup \ldots \cup S_{j+1}| - |S_1 \cup S_2 \cup \ldots \cup S_j| = |S_{j+1} \setminus (S_1 \cup S_2 \cup \ldots \cup S_j)|$.

    The algorithm operates recursively. In one round, machine $2j - 1$ sends $S_{2j-1}$ to machine $2j$, then machine $2j$ computes  $Q_{j} =  S_{2j-1} \cup S_{2j}$. Assuming $k$ is even, the algorithm recursively computes $(Q_1), (Q_1 \cup Q_2), (Q_1 \cup Q_2 \cup Q_3), \ldots, (Q_{1} \cup Q_{2} \cup \ldots \cup Q_{k})$ on machines $2, 4, \ldots, k$. After recursion, machines with even indices $2j$ has the set $S_1 \cup S_2 \cup \ldots \cup S_{2j}$. Then, in one round, machines with odd indices $2j+1$ communicate with machine $2j$ to learn about $S_1 \cup S_2 \cup \ldots \cup S_{2j+1}$. If $k$ is odd, we just do the same on $S_1, S_2, \ldots, S_{k-1}$ and then compute $S_1 \cup S_2 \cup \ldots \cup S_k$ in one round.
    
    There are $O(\log k)$ recursion levels and therefore, the total number of rounds is $O(\log k)$. 
\end{proof}

\begin{lemma}\label{lemma:fraction-of-a-collection}
    Let $\mathcal{S} = \{S_1,\dots,S_r\}$ be a collection of $r=(1+\gamma)k$ sets whose union contains $L$ elements where $\gamma \in [0,1)$, then there exist $k$ sets in $\mathcal{S}$ whose union contains at least $(1-\gamma)L$ elements. Furthermore, we can find these $k$ sets in $O(\log r)$ rounds.
\end{lemma}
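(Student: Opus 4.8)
The plan is to fix the given ordering $S_1,\dots,S_r$, compute the prefix marginals
$m_j := |S_j \setminus (S_1 \cup \dots \cup S_{j-1})|$ (with $m_1 := |S_1|$) via Algorithm~\ref{alg:prefix-coverage}, and then simply discard the $r-k = \gamma k$ sets of smallest marginal and keep the remaining $k$. The basic identity we will use is $\sum_{j=1}^r m_j = L$: each element of the union is counted in exactly one marginal, namely $m_{j^\star}$ where $j^\star = \min\{ j : e \in S_j\}$.

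The heart of the argument is a charging bound on how much coverage is lost by deleting a \emph{set} of sets. Let $T \subseteq [r]$ be the set of discarded indices. An element $e$ of the union is lost exactly when every set containing it lies in $T$; for such an $e$ set $t^\star(e) := \min\{ t \in T : e \in S_t \}$. By minimality of $t^\star(e)$ and the fact that $e$ is covered only by sets indexed in $T$, no $S_j$ with $j < t^\star(e)$ contains $e$, so $e$ is one of the elements counted in $m_{t^\star(e)}$. Hence $e \mapsto t^\star(e)$ is an injection from the lost elements into $\bigcup_{t\in T}\{\text{elements counted in } m_t\}$, which gives that the number of lost elements is at most $\sum_{t\in T} m_t$. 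Consequently the union of the kept $k$ sets has size at least $L - \sum_{t\in T} m_t$.

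Now choose $T$ to be the $\gamma k$ indices with the smallest $m_j$ (ties broken arbitrarily). Since the average of the $\gamma k$ smallest among the $r$ nonnegative numbers $m_1,\dots,m_r$ is at most their overall average $L/r$, we get $\sum_{t\in T} m_t \le \frac{\gamma k}{r}\,L = \frac{\gamma}{1+\gamma}\,L$. Therefore the kept $k$ sets cover at least $\left(1 - \frac{\gamma}{1+\gamma}\right) L = \frac{1}{1+\gamma}\,L \ge (1-\gamma)L$, where the last step uses $(1-\gamma)(1+\gamma) = 1-\gamma^2 \le 1$.

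For the round complexity: by Lemma~\ref{lem:prefix-coverage}, Algorithm~\ref{alg:prefix-coverage} computes all marginals in $O(\log r)$ rounds, leaving machine $j$ holding $m_j$. Since $r = (1+\gamma)k \le 2m = O(n)$ and each $m_j$ fits in $O(\log n)$ bits, a single machine can collect all $r$ pairs $(j, m_j)$ by binary-tree aggregation in $O(\log r)$ rounds (total message size $O(n\log n)$ bits, within the $\tO{n}$ budget), sort them, identify the threshold separating the $\gamma k$ smallest from the rest, and broadcast the set of kept indices back in another $O(\log r)$ rounds. This gives the claimed $O(\log r)$ bound. I expect the only genuinely delicate point to be the charging argument together with the averaging inequality $\sum_{t\in T}m_t \le \frac{\gamma k}{r}L$; the MPC implementation is routine given Lemma~\ref{lem:prefix-coverage} and standard aggregation.
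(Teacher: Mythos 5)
Your proof is correct and follows essentially the same route as the paper's: compute the prefix marginals via Lemma~\ref{lem:prefix-coverage}, charge each lost element to the (removed) first set containing it, and discard the $\gamma k$ sets with smallest marginals using the averaging bound. You are in fact slightly more explicit than the paper on the averaging step (obtaining $\frac{1}{1+\gamma}L \ge (1-\gamma)L$) and on the MPC selection of the smallest marginals, but these are the same argument spelled out in more detail.
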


\begin{proof}
    Consider the following quantities:
    \begin{align*}
        \phi_1 & = |S_1|\\
        \phi_2 & = |S_1 \cup S_2| - |S_1|\\
        \phi_3 & = |S_1 \cup S_2 \cup S_3| - |S_1 \cup S_2|\\
        \ldots
    \end{align*}
    Clearly, $\sum_{j=1}^r \phi_j = L$.  We say $S_j$ is responsible for element $i$ if $i \in S_j \setminus (\bigcup_{l < j} S_l)$. This establishes a one-to-one correspondence between the sets ${S_1, \dots, S_r}$ and the elements they cover. $S_j$ is responsible for exactly $\phi_j$ elements. Furthermore, if we remove some sets from $\mathcal{S}$, and an element becomes uncovered, the set responsible for that element must have been removed. Thus, if we remove the $\gamma k$ sets corresponding to the $\gamma k$ smallest $\phi_j$, then at most $\gamma L$ elements will not have a responsible set. Thus, the number of elements that become uncovered is at most $\gamma L$. 

    To find these sets, we apply Lemma \ref{lem:prefix-coverage} with $r$ in place of $k$ and $O(\epsilon)$ in place of $\gamma$ to learn about $\phi_1, \phi_2, \ldots, \phi_r$ in $O(\log r)=O(\log k)$ rounds. We then remove the $\gamma k = O(\epsilon)k $ sets corresponding to the $\gamma k$ smallest $\phi_j$ and output the remaining $k$ sets. 
\end{proof}

\paragraph{Putting it all together.} We spend $O(1/\epsilon^3 \cdot \log n \cdot \log m)$ rounds to approximately solve the linear program $\Pi_1$. From there, we can round the solution to find a sub-collection of $k + O(\epsilon)k$ sets that cover at least $(1-1/e-O(\epsilon))\Opt$ elements with high probability in $O(1/\epsilon \cdot \log m)$ rounds. We then apply Lemma \ref{lemma:fraction-of-a-collection} to find $k$ sets among these that cover at least $(1-1/e-O(\epsilon))\Opt$ elements in $O(\log k)$ rounds. The total number of rounds is therefore $O(1/\epsilon^3 \cdot \log n \cdot \log m)$.

\paragraph{Reducing the number of rounds to  $O(1/\epsilon^3 \cdot  \log m  \cdot (\log m+ \log(1/\epsilon) ))$.} The described algorithm runs in $O(1/\epsilon^3 \cdot \log n \cdot \log m)$ rounds. Our main result in Theorem \ref{thm:main} states a stronger bound  $O(1/\epsilon^3 \cdot  \log m \cdot  (\log m+ \log(1/\epsilon) ))$ rounds. We achieve this by adopting the sub-sampling framework of McGregor and Vu \cite{MV19}.

Without loss of generality, we may assume that each element is covered by some set. If not, we can remove all of the elements that are not covered by any set using $O(\log m)$ rounds. Specifically, let $\textbf{v}_j$ be the characteristic vector of $S_j$. We can compute $\textbf{v} = \sum_{i=1}^j \textbf{v}_j$ in $O(\log m)$ rounds using the standard converge-cast binary tree algorithm. We can then remove the elements that are not covered by any set (elements corresponding to 0 entries in $\textbf{v}$).

We now have $m$ sets covering $n$ elements. Since $k = \Omega(m)$, we must have that $\Opt = \Omega(n)$. McGregor and Vu showed that if one samples each element in the universe $[n]$ independently with probability $p = \Theta\left( \frac{\log {m \choose k}} {\epsilon^2 \Opt} \right)$ then with high probability, if we run a $\beta$ approximation algorithm on the subsampled universe, the solution will correpond to a $\beta -\epsilon$ approximation on the original universe. We have just argued that $\Opt = \Omega(n)$ and therefore with high probability, we sample $O \left( \frac{\log  {m \choose k} }{\epsilon^2} \right) = O(1/\epsilon^2 \cdot m)$ elements by appealing to Chernoff bound and the fact that ${m \choose k} \leq 2^m$.

As a result, we may assume that $n = O(1/\epsilon^2 \cdot m)$. This results in an $O(1/\epsilon^2  \cdot \log m \cdot (\log m + \log(1/\epsilon)))$ round algorithm.

\paragraph{Bounded frequency.} Assuming $f = \max_i f_i$ is known, we can lift the assumption that $k = \Omega(m)$ and parameterize our algorithm based on $f$ instead.  McGregor et al. \cite{MTV21} showed that the largest $\roundup{k f/\eta}$ sets contain a solution that covers at least $(1-\eta)\Opt$ elements. We therefore can assume that $m = O(k f/\eta)$ by keeping only the largest $\roundup{k f/\eta}$ sets which can be identified in $O(1)$ rounds. 

 We set  $\epsilon=\eta^2/f$ and proceed to obtain a solution that covers at least $(1-\eta^2/f)(1-\eta)\Opt = (1-O(\eta)) \Opt$ elements using at most $k + O(\epsilon m) = k + O(\eta^2/f \cdot k f/\eta) = k + O(\eta k)$ sets as in the discussion above. Appealing to Lemma \ref{lemma:fraction-of-a-collection}, we can find $k$ sets that cover at least $(1-O(\eta)) \Opt$ elements. The total number of rounds is $O(f^3/\eta^6 \cdot \log \frac{kf}{\eta} \cdot (\log \frac{1}{\eta} + \log  \frac{kf}{\eta})) = O(f^3/\eta^6 \cdot \log
 ^2\frac{kf}{\eta})$.


\subsection{Approximate the LP's solution via multiplicative weights} \label{sec:mwu}

Fix an objective value $L$. Let $P$ be a convex region defined by
    \[
        P = \left\{(\textbf{x},\textbf{z}) \in [0,1]^n \times [0,1]^m: \sum_{i \in [n]} x_i = L \text{ and } \sum_{j \in [m]} z_j = m-k \right\}.    
    \]

Note that if $\left( \textbf{x}_1,\textbf{z}_1), (\textbf{x}_2,\textbf{z}_2 \right), \ldots, \left(\textbf{x}_T,\textbf{z}_T \right) \in P$ then $ \left(\frac{1}{T} \sum_{t=1}^T \textbf{x}_t, \frac{1}{T} \sum_{t=1}^T \textbf{z}_t \right) \in P$. Consider the following problem $\Psi_1$ that asks to  either correctly declare that \[
        \nexists (\textbf{x}, \textbf{z}) \in P: \frac{x_i}{f_i} + \frac{1}{f_i}\cdot\sum_{S_j \ni i} z_j \le 1,  \ \ \ \forall i \in [n]
    \]
or to output a solution $(\textbf{x}, \textbf{z}) \in P$ such that
    \[
        \frac{x_i}{f_i} + \frac{1}{f_i}\cdot\sum_{S_j \ni i} z_j \le 1 + \epsilon \ \ \ \forall i \in [n].
    \]
    
Once we have such an algorithm, we can try different values of $L= \rounddown{(1+\epsilon)^0}, \rounddown{(1+\epsilon)^1}, \rounddown{(1+\epsilon)^2}, \ldots, n$ and return the solution corresponding to the largest $L$ that has a feasible solution. There are $O(1/\epsilon \cdot \log n)$ such guesses. We know that the guess $L$ where $\Opt/(1+\epsilon) \leq L \leq \Opt $ must result in a feasible solution.

To avoid introducing a $\log n$ factor in the number of rounds, we partition these $O(1/\epsilon \cdot \log n)$ guesses into batches of size $O(1/\epsilon)$ where each batch corresponds to $O(\log n)$ guesses. Algorithm copies that correspond to guesses in the same batch will run in parallel. This will only introduce a $\log n$ factor in terms of memory used by each machine. By returning the solution corresponding to the largest feasible guess $L$, one attains Theorem \ref{thm:solving-mkc-lp}.

\paragraph{Oracle implementation.} Given a weight vector $\textbf{w} \in \RR^n$ in which $w_i \ge 0$ for all $i \in [n]$. We first consider an easier feasibility problem $\Psi_2$. It asks to either correctly declares that 
    \begin{align*} 
        \nexists (\textbf{x},\textbf{z}) \in P:\ \  \sum_{i=1}^n w_i  \cdot \left(\frac{x_i}{f_i}+\sum_{S_j \ni i} \frac{z_i}{f_i}\right) \leq \sum_{i=1}^n w_i, \ \ \ \forall i \in [n]
    \end{align*}
or to outputs a solution $(\textbf{x},\textbf{z}) \in P$ such that
    \begin{align} \label{eq:oracle-prob}
        \sum_{i=1}^n w_i  \cdot \left(\frac{x_i}{f_i}+\sum_{S_j \ni i} \frac{z_i}{f_i}\right) \leq \sum_{i=1}^n w_i + \frac{1}{n^{5}}, \ \ \ \forall i \in [n].
    \end{align}

That is, if the input is feasible, then output the corresponding $(\textbf{x},\textbf{z}) \in P$ that approximately satisfy the constraint. Otherwise, correctly conclude that the input is infeasible.  In the multiplicative weights update framework, this is known as the approximate oracle.

Note that if there is a feasible solution to $\Psi_1$, then there is a feasible solution to $\Psi_2$ since
    \begin{align*}
        \frac{x_i}{f_i} + \frac{1}{f_i}\cdot\sum_{S_j \ni i} z_j \le 1 \ \ \ \forall i \in [n] \implies \sum_{i=1}^n w_i  \cdot \left(\frac{x_i}{f_i}+\sum_{S_j \ni i} \frac{z_i}{f_i}\right) \leq \sum_{i=1}^n w_i .
    \end{align*}

We can implement an oracle that solves the above feasibility problem $\Psi_2$ as follows. First, observe that
    \begin{align*}
        \sum_{i=1}^n \frac{w_i}{f_i} \cdot \left(x_i+\sum_{S_j \ni i} z_j\right) \le \sum_{i=1}^n w_i \\
        \iff \sum_{i=1}^n \frac{w_i}{f_i} \cdot x_i + \sum_{i=1}^n \frac{w_i}{f_i} \cdot \sum_{S_j \ni i} z_j \le \sum_{i=1}^n w_i \\ 
        \iff \sum_{i=1}^n \frac{w_i}{f_i} \cdot x_i + \sum_{j=1}^m z_j \cdot \sum_{i \in S_j} \frac{w_i}{f_i} \le \sum_{i=1}^n w_i.
    \end{align*}
To ease the notation, define
    \begin{align*}
        p_i := \frac{w_i}{f_i}, \ \ \ \forall i \in [n] \text{, and } q_j  := \sum_{i \in S_j} \frac{w_i}{f_i} =  \sum_{i \in S_j} p_i, \ \ \ \forall j \in [m].
    \end{align*}
We therefore want to check if there exists $(\textbf{x},\textbf{z}) \in P$ such that
    \begin{align*}
        LHS(\textbf{x}, \textbf{z}) := \sum_{i=1}^n x_i  p_i + \sum_{j=1}^m z_j  q_j \le \sum_{i=1}^n w_i.
    \end{align*}
We will minimize the left hand side by minimizing each sum separately.  We can indeed do this exactly. However, there is a subtle issue where we need to bound the number of bits required to represent  $p_i = \frac{w_i}{f_i}$ and  $q_j = \sum_{i \in S_j} \frac{w_i}{f_i} = \sum_{i \in S_j} p_i$ given the memory constraint. To do this, we truncate the value of each $p_i$ after the $(10 \log_2 n)$-th bit following the decimal point. Note that this will result in an underestimate of $p_i$ by at most $1/n^{10}$. In particular, let $\hat{p}_i$ be $p_i$ after truncating the value of $p_i$ at the $(10 \log_2 n)$-th bit after the decimal point and $\hat{q}_j = \sum_{i \in S_j} \hat{p}_i$. For any $(\textbf{x}, \textbf{z}) \in [0,1]^n \times [0,1]^m$, we have
\begin{align*}
    \widehat{LHS}(\textbf{x}, \textbf{z}) & := \sum_{i=1}^n \hat{p}_i  x_i + \sum_{j=1}^m z_j  \sum_{i \in S_j} \hat{p}_i \geq \left( \sum_{i=1}^n p_i  x_i \right) - \frac{n}{n^{10}} + \left(\sum_{j=1}^m z_j  \sum_{i \in S_j} p_i \right) - \frac{m f_i}{n^{10}} \\
    & >  LHS(\textbf{x}, \textbf{z}) - \frac{1}{n^5}.
\end{align*}

The last inequality is based on the assumption that $m, f_i \le n$. Therefore, $LHS(\textbf{x}, \textbf{z}) - 1/n^5 \le \widehat{LHS}(\textbf{x}, \textbf{z}) \le LHS(\textbf{x}, \textbf{z})$. Note that since $\sum_{i=1}^n x_i = L$ and $\sum_{j=1}^m z_j = m-k$ , to minimize $\widehat{LHS}(\textbf{x}, \textbf{z})$ over $(\textbf{x}, \textbf{z}) \in P$, we simply set
    \begin{align*}
        x_i & = [\text{$\hat{p}_i$ is among the $L$ smallest values of $\{ \hat{p}_t \}_{t=1}^n$} \}], \\
        z_j & = [\text{$\hat{q}_j$ is among the $m-k$ smallest values of $\{ \hat{q}_t \}_{t=1}^m$} \}].
    \end{align*}

After setting $\textbf{x}, \textbf{z}$ as above, if $\widehat{LHS}(\textbf{x}, \textbf{z}) > \sum_{i = 1}^n w_i \implies LHS(\textbf{x}, \textbf{z}) > \sum_{i = 1}^n w_i$, then it is safe to declare that the system is infeasible. Otherwise, we have found $(\textbf{x}, \textbf{z}) \in P$ such that $\widehat{LHS}(\textbf{x}, \textbf{z}) \le \sum_{i = 1}^n w_i \implies LHS(\textbf{x}, \textbf{z}) \le \sum_{i = 1}^n w_i + 1/n^5$ as required by Equation (\ref{eq:oracle-prob}).

\begin{lemma} \label{lem:oracle}
    Assume that all machines have the vector $\textbf{w}$. We can solve the feasibility problem $\Psi_2$ in $O(1)$ rounds, where all machines either learn that the system is infeasible or obtain an approximate solution $(\textbf{x}, \textbf{z}) \in P$ that satisfies Equation (\ref{eq:oracle-prob}).
\end{lemma}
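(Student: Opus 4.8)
The plan is to exploit the fact that, with $\textbf{w}$ available on every machine (the hypothesis) and $\textbf{f}$ already broadcast to all machines in the preliminaries, the oracle is almost entirely a local computation; the only genuine communication needed is to compute a single order statistic of the vector $(\hat q_1,\dots,\hat q_m)$ and to disseminate the answer.

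First I would have every machine locally form $p_i = w_i/f_i$, its truncation $\hat p_i$, and the scalar $\sum_{i\in[n]} w_i$. Since $L$ is a fixed integer, the $\textbf{x}$ that minimizes $\widehat{LHS}$ over $P$ is just the indicator of the $L$ coordinates with smallest $\hat p_i$ (ties broken by index), so every machine can write down $\textbf{x}$ — and the number $\sum_i \hat p_i x_i$ — with no communication at all. For the $\textbf{z}$-part, machine $j$ already holds $S_j$ together with all of $\hat{\textbf{p}}$, so it computes $\hat q_j = \sum_{i\in S_j}\hat p_i$ locally; then in one convergecast round each machine sends the single number $\hat q_j$ to a designated machine, which thereby collects all $m \le n$ values $\hat q_1,\dots,\hat q_m$ — each of $\tO{1}$ bits, using the bound on the bit-length of the weights established in Section~\ref{sec:mwu} — comfortably inside its $\tO{n}$ memory. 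That machine sorts the $\hat q_j$'s, sets $z_j = 1$ on exactly the $m-k$ smallest, computes $\widehat{LHS}(\textbf{x},\textbf{z}) = \sum_i \hat p_i x_i + \sum_j z_j \hat q_j$, and compares it to $\sum_i w_i$; in one further broadcast round it sends back either the verdict ``infeasible'' or the $m$-bit vector $\textbf{z}$, after which every machine holds $(\textbf{x},\textbf{z})$. This is $O(1)$ rounds.

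For correctness I would just invoke the chain of inequalities already assembled right before the lemma: for every $(\textbf{x},\textbf{z})\in P$, $\widehat{LHS}(\textbf{x},\textbf{z}) \le LHS(\textbf{x},\textbf{z}) \le \widehat{LHS}(\textbf{x},\textbf{z}) + 1/n^5$, and the $(\textbf{x},\textbf{z})$ produced above minimizes $\widehat{LHS}$ over $P$ (the two sums are minimized independently by taking the smallest $\hat p_i$'s and the smallest $\hat q_j$'s subject to $\sum_i x_i = L$ and $\sum_j z_j = m-k$, and the minimum of a linear objective over $P$ is attained at such an integral point). Consequently, if the algorithm declares infeasibility then $LHS \ge \widehat{LHS} \ge \min_P \widehat{LHS} > \sum_i w_i$ everywhere on $P$, so $\Psi_2$ truly has no feasible point; and if $\Psi_2$ is feasible, some $(\textbf{x}',\textbf{z}')\in P$ has $LHS(\textbf{x}',\textbf{z}') \le \sum_i w_i$, whence $\min_P \widehat{LHS} \le \widehat{LHS}(\textbf{x}',\textbf{z}') \le \sum_i w_i$, so the algorithm outputs a solution, and that solution obeys $LHS \le \widehat{LHS} + 1/n^5 \le \sum_i w_i + 1/n^5$ — exactly Equation~(\ref{eq:oracle-prob}).

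The step I expect to carry the real weight is the MPC bookkeeping in the second paragraph rather than any inequality: verifying that $\textbf{w},\textbf{f},\textbf{p},\hat{\textbf{p}}$ and the gathered vector $(\hat q_j)_{j\in[m]}$ each fit in $\tO{n}$ memory (this is where $m \le n$ and the $\tO{1}$-bit truncation get used), and that a single $\tO{n}$-size object — the $m$ numbers $\hat q_j$ on the way in, or the $m$-bit vector $\textbf{z}$ on the way out — can be gathered at, or broadcast from, one machine within a single round under the paper's communication model. Everything else is arithmetic the surrounding discussion has already performed, so once this accounting is in place the lemma follows.
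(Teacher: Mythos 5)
Your proposal is correct and follows essentially the same route as the paper: compute $\hat p_i$ locally, convergecast the $O(\log n)$-bit values $\hat q_j$ to one machine, take the $L$ smallest $\hat p_i$'s and $m-k$ smallest $\hat q_j$'s to minimize $\widehat{LHS}$ over $P$, compare with $\sum_i w_i$, and use $LHS - 1/n^5 \le \widehat{LHS} \le LHS$ for the infeasibility/approximation guarantee. The only difference is cosmetic — you have every machine compute $\textbf{x}$ locally while the paper has the central machine broadcast both $\textbf{x}$ and $\textbf{z}$ (still $O(n)$ bits per machine), and you state the ``minimizer of $\widehat{LHS}$ certifies infeasibility'' step a bit more explicitly than the paper does.
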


\begin{proof}
	We have argued above that the oracle we design either correctly declares that the system is infeasible or outputs $(\textbf{x}, \textbf{z}) \in P$ such that $LHS(\textbf{x}, \textbf{z}) \le \sum_{i = 1}^n w_i + 1/n^5$.
	Recall that each machine has the frequency vector $\textbf{f}$ and suppose for the time being that each machine also has the vector $\textbf{w}$.  We can implement this oracle in $O(1)$ rounds as follows. Each machine $j$ computes $\hat{q}_j = \sum_{i \in S_j} \hat{p}_i$ since it has $w_i$ and $f_i$ to compute $p_i$ for all $i \in [n]$. Then it sends $\hat{q}_j$ to the central machine. Note that each $\hat{p}_i$ can be represented using $O(\log n)$ bits. Observe that $\hat{q}_j$ is the sum of at most $n$ different $\hat{p}_i$. The number of bits in the fractional part does not increase while the number of bits in the integer part increases by at most $O(1)$, as the sum is upper bounded by a crude bound $2n \times 2^{O(\log n)} = 2^{O(\log n)}$. Thus, the central machine receives at most $\tO{m} = \tO{n}$ bits from all other machines.

	The central machine will then set $\textbf{x}$ and $\textbf{z}$ as described above. This allows the central machine to correctly determine whether the system is infeasible or to find $(\textbf{x},\textbf{z}) \in P$ that satisfies Equation (\ref{eq:oracle-prob}).   Since the entries of $\textbf{x}$ and $\textbf{z}$ are binary, they can be sent to non-central machines, with each machine receiving a message of $n + m = O(n)$ bits. We summarize the above as the following lemma.
\end{proof}

\paragraph{Solving the LP via multiplicative weights.}  Once the existence of such an oracle is guaranteed, we can follow the multiplicative weights framework to approximately solve the LP.  We will first explain how to implement the MWU algorithm in the MPC model. See Algorithm \ref{alg:mwu}.

\begin{algorithm}
    \caption{Multiplicative weights for solving the LP} \label{alg:mwu}
    \KwIn{Objective value $L$, $\epsilon \leq 1/4$}
    Initialize $w_i^{(0)} = 1$ for all $i \in [n]$. \\
    \For{iteration $t=1,2,\ldots,T = O(1/\epsilon^2 \cdot \log n)$} {
        Run the oracle in Lemma \ref{lem:oracle}  with $\textbf{w}^{(t-1)}$ to check if there exists a feasible solution. If the answer is \textsc{Infeasible}, stop the algorithm. If the answer is \textsc{Feasible}, let $\textbf{x}^{(t)}$ and $\textbf{z}^{(t)}$ be the output of the oracle that are now stored in all machines \label{mwu:oracle-step}.

        Each machine $j$ constructs $Y_j = \{Y_{j1},Y_{j2},\ldots,Y_{jn}\}$ where 
            $Y_{ji} = {z_j^{(t)}} \cdot [\{i \in S_j\}]$. \\
        
        Compute $W = \sum_{j \in [m]} Y_j$ in $O(\log m)$ rounds using the a converge-cast binary tree and send $W$ to the central machine. Note that $W_i = \sum_{S_j \ni i} {z_j^{(t)}}$.\\
        For each $i \in [n]$, the central machine computes $E_i^{(t)} = f_i \cdot \textup{error}_i^{(t)}$ where
        \[
            \textup{error}_i^{(t)} := 1 - \frac{x_i^{(t)}}{f_i} - \frac{W_i}{f_i}  = 1 - \frac{x_i^{(t)}}{f_i} - \sum_{S_j \ni i} \frac{z_j^{(t)}}{f_i}  \ \ \ \forall i \in [n].
        \]
        and sends $\sum_{d=1}^t E_i^{(d)}$ to all other machines. \label{MWU:step_w}\\
        For each $i \in [n]$, each machine computes  
	\[
		w_{i}^{(t)}  = 2^{-\epsilon \cdot \sum_{d=1}^t E_i^{(d)} /f_i} =  2^{-\epsilon \cdot \sum_{d=1}^t \textup{error}_i^{(d)}} =  2^{-\epsilon \cdot \textup{error}_i^{(t)}} w_i^{(t-1)}.
	\]

    }

    After $T$ iterations, output 
	\[
		\textbf{x} = \frac{1}{T} \sum_{t=1}^T \textbf{x}^{(t)} \text{, and }  \textbf{z} = \frac{1}{T} \sum_{t=1}^T \textbf{z}^{(t)}.
	\]

\end{algorithm}

\begin{lemma}
    Algorithm \ref{alg:mwu} can be implemented in $O(1/\epsilon^2 \cdot \log n \cdot \log m)$ rounds.
\end{lemma}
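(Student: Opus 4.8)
The plan is to charge each of the $T=O(\epsilon^{-2}\log n)$ iterations of the loop $O(\log m)$ MPC rounds and multiply. First I would record the loop invariant that at the start of iteration $t$ every machine holds the full weight vector $\textbf{w}^{(t-1)}$: this is true at $t=1$ since $w_i^{(0)}=1$, and it is re-established at the end of every iteration because the central machine broadcasts $\sum_{d\le t}E_i^{(d)}$ to all machines, from which each machine recomputes $w_i^{(t)}=2^{-\epsilon\sum_{d\le t}E_i^{(d)}/f_i}$ locally. Hence the precondition of Lemma \ref{lem:oracle} is satisfied and the oracle call in Step \ref{mwu:oracle-step} takes $O(1)$ rounds, leaving $\textbf{x}^{(t)},\textbf{z}^{(t)}$ in every machine. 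Building each $Y_j$ is local. The aggregation $W=\sum_{j\in[m]}Y_j$ is a standard converge-cast on a binary tree over the $m$ machines, hence $O(\log m)$ rounds, and the subsequent dissemination of $\sum_{d\le t}E_i^{(d)}$ from the central machine is a binary-tree broadcast, again $O(\log m)$ rounds; every other line of the iteration is local arithmetic. So one iteration costs $O(\log m)$ rounds and the whole algorithm costs $T\cdot O(\log m)=O(\epsilon^{-2}\log n\cdot\log m)$ rounds, with the final averages $\textbf{x}=\tfrac1T\sum_t\textbf{x}^{(t)}$ and $\textbf{z}=\tfrac1T\sum_t\textbf{z}^{(t)}$ maintained as running sums (each machine receives every $\textbf{x}^{(t)},\textbf{z}^{(t)}$, so this adds no communication).

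The substantive part is to verify that the converge-cast and broadcast are legitimate in the MPC model, i.e. that no message and no quantity a machine must keep exceeds the $\tilde{O}(n)$ memory budget. Here I would use that the oracle of Lemma \ref{lem:oracle} returns binary vectors $\textbf{x}^{(t)}\in\{0,1\}^n$, $\textbf{z}^{(t)}\in\{0,1\}^m$. Then $W_i=\sum_{S_j\ni i}z_j^{(t)}$ is an integer in $[0,f_i]\subseteq[0,n]$, so each partial sum formed during the converge-cast is an $n$-vector of integers bounded by $n$, i.e. $O(n\log n)$ bits; $E_i^{(t)}=f_i-x_i^{(t)}-W_i$ is an integer in $[-n,n]$, so $\sum_{d\le t}E_i^{(d)}$ has absolute value at most $Tn=\poly(n)$ and fits in $O(\log n)$ bits, making the broadcast message $O(n\log n)$ bits. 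Since $\bigl|\sum_{d\le t}E_i^{(d)}/f_i\bigr|=\bigl|\sum_{d\le t}\textup{error}_i^{(d)}\bigr|\le T$, the exponent $\epsilon\sum_{d\le t}E_i^{(d)}/f_i$ lies in $[-\epsilon T,\epsilon T]$ with $\epsilon T=O(\log n)$, so $w_i^{(t)}\in[2^{-O(\log n)},2^{O(\log n)}]$ and, stored to the $O(\log n)$-bit fractional precision already used by the oracle's truncation of $p_i=w_i/f_i$, costs $O(\log n)$ bits per coordinate and $\tilde{O}(n)$ bits for the whole vector. Using $m\le n$, every binary-tree message is $\tilde{O}(n)$ bits and every node communicates with $O(1)$ neighbors, so the memory constraint holds throughout; the iteration counter and the running sums $\sum_t\textbf{x}^{(t)}\in\{0,\dots,T\}^n$, $\sum_t\textbf{z}^{(t)}\in\{0,\dots,T\}^m$ are likewise $\tilde{O}(n)$ bits.

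I do not expect a genuine obstacle: the value $T=O(\epsilon^{-2}\log n)$ is exactly the standard multiplicative-weights iteration count for a feasibility instance whose per-constraint error $1-x_i/f_i-\sum_{S_j\ni i}z_j/f_i$ has width $O(1)$, and the rest is bookkeeping. The one place that needs care is precisely the bit-length accounting for $w_i^{(t)}$ and for the accumulated errors $\sum_{d\le t}E_i^{(d)}$ — a careless bound would let these grow past $\tilde{O}(n)$ bits and invalidate the binary-tree primitives — but the bounds $\epsilon T=O(\log n)$ and $Tn=\poly(n)$, together with the oracle's truncation, keep every coordinate within $\poly\log n$ bits, which is what makes the whole implementation fit the memory budget while costing only $O(\log m)$ rounds per iteration.
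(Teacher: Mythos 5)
Your proof follows essentially the same route as the paper: $O(1)$ rounds for the oracle plus $O(\log m)$ rounds of converge-cast and broadcast per iteration, multiplied by $T=O(\epsilon^{-2}\log n)$ iterations, with the weight vector carried implicitly by broadcasting the $O(\log n)$-bit integers $\sum_{d\le t}E_i^{(d)}$ (the paper bounds these by $2n^4$). The only slip is your claim that $\epsilon T=O(\log n)$ — in fact $\epsilon T=O(\epsilon^{-1}\log n)$, so $w_i^{(t)}$ need not lie in $[2^{-O(\log n)},2^{O(\log n)}]$ — but this is immaterial because, as both you and the paper observe, machines never store the weights explicitly, only the $O(\log n)$-bit exponent sums from which $w_i^{(t)}$ is recomputed.
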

\begin{proof}
    Without loss of generality, we can round $\epsilon$ down to the nearest power of $1/2$. Then, $-\epsilon$ can be represented using $O(\log (1/\epsilon)) = O(\log n)$ bits, since we assume $1/\epsilon < n$.

    Each machine maintains the current vectors $\textbf{x}^{(t)}$, $\textbf{z}^{(t)}$, and $\textbf{w}^{(t)}$, each of which has at most $O(n)$ entries. Since the entries in $\textbf{x}^{(t)}$ and $\textbf{z}^{(t)}$ are binary, we can represent them using $O(n)$ bits. Recall that, in the oracle described in Lemma \ref{lem:oracle}, the central machine can broadcast $\textbf{x}^{(t)}$ and $\textbf{z}^{(t)}$ to all other machines in one round without violating the memory constraint.
    
    It remains to show that the number of bits required to represent $w_i^{(t)}$ is $O(\log n)$ in all iterations.  The central machine then implicitly broadcasts $w_i^{(t)}$ to all other machines in iteration $t$ by sending $\sum_{d=1}^t E_i^{(d)}$ 
    to all other machines for each $i \in [n]$. 

Note that each $w_i^{(t)}$ has the form 
    \[ 
        w_i^{(t)} = 2^{-\epsilon \cdot \sum_{d=1}^t E_i^{(d)}/f_i} = 2^{-\epsilon \cdot \sum_{d=1}^t \textup{error}_i^{(d)}}.
    \] 

   For any $d$, since $m,f_i \le n$, we have
    \[
        f_i-m-1 \le f_i \cdot \textup{error}_i^{(d)}  \le f_i \implies f_i \cdot \textup{error}_i^{(d)}  \in [-2n, 2n].
    \]
    Since $T = \Theta(1/\epsilon^2 \cdot \log n) = o(n^3)$, we have
    \[
        \sum_{d=1}^t E_i^{(d)} = \sum_{d=1}^t f_i \cdot \textup{error}_i^{(d)} \in [-2 n t, 2 n t] \subset [-2n^4, 2n^4].
    \]
    Thus, we can represent $\sum_{d=1}^t E_i^{(d)}$ using $O(\log n)$ bits. This implies that the central machine can implicitly broadcast $w_i^{(t)}$ to all other machines in $O(1)$ rounds. Putting it all together, each of the $T = \Theta(1/\epsilon^2 \cdot \log n)$ iterations consists of: 1) a call to the oracle, which takes $O(1)$ rounds, 2) computing $\textup{error}_i^{(t)}$ for each $i \in [n]$ in step \ref{MWU:step_w}, which takes $O(\log m)$ rounds, and 3) the central machine broadcasting $\textbf{w}^{(t)}$ in one round to all other machines before proceeding to the next iteration.
\end{proof}

The next lemma is an adaptation of the standard multiplicative weights algorithm.

\begin{lemma}
    The output of Algorithm \ref{alg:mwu} satisfies the following property. If there exists a feasible solution, then the output satisfies:
        \begin{align*}
            \frac{x_i}{f_i} + \sum_{S_j \ni i} \frac{z_j}{f_i} & \le 1 + \epsilon \ \ \ \forall i \in [n] \text{, and } \sum_{i=1}^n x_i = L.
        \end{align*}
        Otherwise, the algorithm correctly concludes that the system is infeasible.
\end{lemma}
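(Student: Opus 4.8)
My plan is to run the textbook potential-function analysis of multiplicative weights; the only points that need care are the \emph{width} of the per-constraint loss vector and the harmlessness of the oracle's additive $1/n^5$ slack. Throughout I use the notation $\textup{error}_i^{(t)}$, $\textbf{x}^{(t)}$, $\textbf{z}^{(t)}$ of Algorithm~\ref{alg:mwu}, and write $\Phi^{(t)} = \sum_{i \in [n]} w_i^{(t)}$ for the total weight after iteration $t$, so that $\Phi^{(0)} = n$.

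First I would dispose of the logical skeleton. If the oracle of Lemma~\ref{lem:oracle} ever returns \textsc{Infeasible} in some iteration $t$, then $\Psi_2$ with weights $\textbf{w}^{(t-1)}$ has no solution; since the discussion preceding Lemma~\ref{lem:oracle} shows that feasibility of $\Psi_1$ implies feasibility of $\Psi_2$ for every nonnegative weight vector, infeasibility of $\Psi_2$ forces $\Psi_1$ to be infeasible, so the declaration is correct --- this establishes the second (``otherwise'') part of the lemma. Conversely, if $\Psi_1$ admits a feasible solution the oracle returns \textsc{Feasible} in all $T$ iterations, so the algorithm outputs $\textbf{x} = \tfrac1T\sum_{t=1}^T \textbf{x}^{(t)}$ and $\textbf{z} = \tfrac1T\sum_{t=1}^T \textbf{z}^{(t)}$; each $(\textbf{x}^{(t)},\textbf{z}^{(t)}) \in P$ and $P$ is convex, so $(\textbf{x},\textbf{z}) \in P$ and in particular $\sum_i x_i = L$. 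Unrolling the definition of $\textup{error}$, the remaining claim $\tfrac{x_i}{f_i} + \sum_{S_j \ni i}\tfrac{z_j}{f_i} \le 1+\epsilon$ for all $i$ is \emph{exactly} $-\tfrac1T\sum_{t=1}^T \textup{error}_i^{(t)} \le \epsilon$, so it suffices to bound this average regret.

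For the potential argument I would first record the width bound: since $x_i^{(t)} \in [0,1]$, $f_i \ge 1$, and $\sum_{S_j \ni i} z_j^{(t)}$ sums exactly $f_i$ values in $[0,1]$, we get $\textup{error}_i^{(t)} = 1 - \tfrac{x_i^{(t)}}{f_i} - \tfrac1{f_i}\sum_{S_j \ni i} z_j^{(t)} \in [-1,1]$. Using $e^{y} \le 1 + y + y^2$ for $|y| \le 1$ with $y = -\epsilon(\ln 2)\,\textup{error}_i^{(t)}$ (legitimate as $\epsilon \le 1/4$), summing over $i$ against the weights $w_i^{(t-1)}$, and plugging in $(\textup{error}_i^{(t)})^2 \le 1$ together with the oracle guarantee $\sum_i w_i^{(t-1)}\textup{error}_i^{(t)} \ge -1/n^5$ (which is Equation~(\ref{eq:oracle-prob}) rearranged), yields the one-step inequality $\Phi^{(t)} \le (1+\epsilon^2)\Phi^{(t-1)} + \epsilon(\ln 2)/n^5$. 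Iterating from $\Phi^{(0)} = n$ and invoking $T = O(\epsilon^{-2}\log n) = o(n^3)$ (so the geometric sum of additive slacks is $o(1)$) gives $\Phi^{(T)} \le 2n\,e^{\epsilon^2 T}$, while trivially $\Phi^{(T)} \ge w_i^{(T)} = 2^{-\epsilon\sum_{t=1}^T \textup{error}_i^{(t)}}$ for each $i$. Taking $\log_2$ of $2^{-\epsilon\sum_t \textup{error}_i^{(t)}} \le 2n\,e^{\epsilon^2 T}$ and dividing by $\epsilon T$ gives $-\tfrac1T\sum_{t=1}^T \textup{error}_i^{(t)} \le \tfrac{\log_2(2n)}{\epsilon T} + \epsilon\log_2 e$; choosing the hidden constant in $T = \Theta(\epsilon^{-2}\log n)$ so the first term is $\le \epsilon$ leaves $-\tfrac1T\sum_t \textup{error}_i^{(t)} \le (1+\log_2 e)\epsilon$, and rescaling $\epsilon$ by the constant $1+\log_2 e$ (which changes $T$ only by a constant factor) finishes the proof.

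I expect the one genuinely delicate point to be controlling the oracle's additive error: since the slack is $1/n^5$ per round and there are only $\poly(n)$ rounds, it perturbs $\Phi^{(T)}$ by $o(1)$ and hence the final averaged regret by $o(\epsilon)$ after division by $\epsilon T$ --- but one has to write this out to be sure nothing compounds. Everything else is routine multiplicative-weights bookkeeping, and it is exactly the width-one bound $\textup{error}_i^{(t)} \in [-1,1]$ --- which hinges on the sum $\sum_{S_j \ni i} z_j^{(t)}$ running over only the $f_i$ sets that contain $i$, rather than over all $m$ sets --- that makes the regret term decay like $\tfrac{\log n}{\epsilon T}$.
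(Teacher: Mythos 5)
Your proposal is correct and follows essentially the same route as the paper's proof: the potential $\Phi^{(t)}=\sum_i w_i^{(t)}$ with $\alpha=\epsilon\ln 2$, the inequality $e^{y}\le 1+y+y^2$ together with the width bound $\textup{error}_i^{(t)}\in[-1,1]$, the oracle's $1/n^5$ slack absorbed through the geometric sum, and the final step of lower-bounding $\Phi^{(T)}$ by a single $w_i^{(T)}$, taking logarithms, and dividing by $T$ with $T=\Theta(\epsilon^{-2}\log n)$. Your bookkeeping (base-2 logs, rescaling $\epsilon$ by $1+\log_2 e$, bounding the accumulated slack by $o(1)$ rather than absorbing it into the $2n$ factor) differs only cosmetically from the paper's.
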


\begin{proof}
    If the algorithm does not output \textsc{Infeasible}, this implies that in each iteration $t$, $\sum_{i=1}^n x_i^{(t)} = L$, and therefore $\sum_{i=1}^n x_i = L$ since $\textbf{x} = \frac{1}{T} \sum_{t=1}^T \textbf{x}^{(t)}$. Similarly, in each iteration $t$, $\sum_{j=1}^m z_j^{(t)} = m-k$, and thus $\sum_{j=1}^m z_j = m-k$ since $\textbf{z} = \frac{1}{T} \sum_{t=1}^T \textbf{z}^{(t)}$. Hence, the output $(\textbf{x}, \textbf{z}) \in P$. Define the potential function
    \begin{align*}
        \Phi^{(t)} := \sum_{i=1}^n w_i^{(t)}.
    \end{align*}

    We will make use of the fact $\exp{-\eta x} \leq 1 - \eta x + \eta^2 x^2$ for $|\eta x| \leq 1$. Note that for all $i$ and $t$, we always have
    \[
        \textup{error}_i^{(t)} =  1 - \frac{x_i^{(t)}}{f_i} - \sum_{S_j \ni i} \frac{z_j^{(t)}}{f_i}   \in [-1, 1].
    \]
    Note that $\textup{error}_i^{(t)} \in [-1, 1]$ because $0 \leq x_i^{(t)} \leq 1 \leq f_i$ and $0 \leq \sum_{S_j \ni i} {z_j^{(t)}} \leq \sum_{S_j \ni i} 1 = f_i $. Set $\alpha = \epsilon\cdot \ln(2)$, as long as $\epsilon \le 1/4$, we have $|\alpha \cdot \textup{error}_i^{(t)} | < 1$ and therefore
    \begin{align*}
        w_i^{(t)} & = \exp{-\alpha\cdot \textup{error}_i^{(t)}} \cdot w_i^{(t-1)} \leq \left( 1 - \alpha \cdot \textup{error}_i^{(t)} + \alpha^2 \cdot \left(\textup{error}_i^{(t)}\right)^2 \right) \cdot w_i^{(t-1)}.
    \end{align*}
    Summing over $i$ gives:
    \begin{align*}
        \Phi^{(t)} & \leq \sum_{i=1}^n \left( 1 - \alpha \cdot \textup{error}_i^{(t)} +  \alpha^2 \right) \cdot w_i^{(t-1)} \\
            & = (1+\alpha^2) \sum_{i=1}^n w_i^{(t-1)} - \alpha \sum_{i=1}^n \textup{error}_i^{(t)} \cdot w_i^{(t-1)}.
    \end{align*}
    The first inequality follows from the fact that $\left(\textup{error}_i^{(t)}\right)^2 \in [0, 1]$.
    Note that
    \begin{align*}
        \sum_{i=1}^n \textup{error}_i^{(t)} w_i^{(t-1)} &= \sum_{i=1}^n w_i^{(t-1)} \left( 1 - \frac{x_i^{(t-1)}}{f_i} - \sum_{S_j \ni i} \frac{z_j^{(t-1)}}{f_i} \right) \\
        & = \sum_{i=1}^n w_i^{(t-1)} - \sum_{i=1}^n w_{i}^{(t-1)} \left( \frac{x_i^{(t-1)}}{f_i}  + \sum_{S_j \ni i} \frac{z_j^{(t-1)}}{f_i} \right) \geq -\frac{1}{n^5}.
    \end{align*}
    The last inequality follows from the fact that the oracle is guaranteed to find $\textbf{x}^{(t-1)}$ and $\textbf{z}^{(t-1)}$ such that
    \begin{align*}
        \sum_{i=1}^n \left( \frac{x_i^{(t-1)}}{f_i} +  \sum_{S_j \ni i}\frac{z_j^{(t-1)}}{f_i}  \right) w_i^{(t-1)} \leq \sum_{i=1}^n w_i^{(t-1)} + \frac{1}{n^5}.
    \end{align*}
    
    Thus, $\Phi^{(t)} \leq (1+  \alpha^2) \sum_{i=1}^n w_i^{(t-1)} + \frac{\alpha}{n^5} = (1+\alpha^2) \Phi^{t-1} + \frac{\alpha}{n^5}.$ By a simple induction,
    \begin{align*}
        \Phi^{(T)} & \le (1 + \alpha^2) \Phi^{(T-1)} + \frac{\alpha}{n^5} \\
        & \le (1 + \alpha^2)^2 \Phi^{(T-2)} + \frac{\alpha}{n^5} \left(1 + \alpha^2\right) + \frac{\alpha}{n^5} \\     
        & \le (1 + \alpha^2)^3 \Phi^{(T-3)} + (1 + \alpha^2)^2 \frac{\alpha}{n^5} + \frac{\alpha}{n^5} \left(1 + \alpha^2\right) + \frac{\alpha}{n^5} \\
        & \ldots \\ 
        & \le (1 + \alpha^2)^T \Phi^{(0)} + \frac{\alpha}{n^5} \sum_{t=0}^{T-1} (1 + \alpha^2)^t  \\
        & =  (1 + \alpha^2)^T \Phi^{(0)} + \frac{\alpha}{n^5}  \cdot \frac{(1 + \alpha^2)^T - 1}{\alpha^2} \\
        & \leq  (1 + \alpha^2)^T \Phi^{(0)} + \frac{1}{\alpha n^5}  {(1 + \alpha^2)^T } .
    \end{align*}
    Recall that $\alpha = \epsilon \ln2$ and we assume $1/\epsilon < n/10$. Thus, $1/(\alpha n^5) < \epsilon^4/ \ln 2 <1$. Furthermore, recall that $\Phi^{(0)} = n$. We have, 
    \begin{align*}
        w_i^{(T)} &  \leq (1 + \alpha^2)^T (n+1) < (1 + \alpha^2)^T 2n \\
        \exp{-\alpha \sum_{t=1}^T \textup{error}_i^{(t)}} & \leq (1  + \alpha^2)^T (2n)  \\
        -\alpha \sum_{t=1}^T \textup{error}_i^{(t)} & \leq \ln (2n) + T \ln (1 + \alpha^2).
    \end{align*}
    We use the fact that $\ln(1+x) \le x$ for $x \in \RR$ to get 
    \begin{align*}
        \sum_{t=1}^T \textup{error}_i^{(t)} & \geq - \frac{\ln (2 n)}{\alpha} - T \frac{\ln (1  +  \alpha^2)}{\alpha} \\
        \sum_{t=1}^T \left( 1 - \frac{x_i^{(t)}}{f_i} - \sum_{S_j \ni i} \frac{z_j^{(t)}}{f_i} \right) 
        & \geq - \frac{\ln (2 n)}{\alpha} - T \frac{ \alpha^2 }{\alpha}  \\
        \frac{1}{T}\sum_{t=1}^T \left( 1 - \frac{x_i^{(t)}}{f_i} - \sum_{S_j \ni i} \frac{z_j^{(t)}}{f_i} \right) 
        & \geq - \frac{\ln (2 n)}{T \alpha} - \alpha \\
        1 - \frac{x_i}{f_i} - \sum_{S_j \ni i} \frac{z_j}{f_i} & \geq - \frac{\ln (2n)}{T \alpha} -  \alpha \\
        \frac{x_i}{f_i} + \sum_{S_j \ni i} \frac{z_j}{f_i} & \leq 1 + \frac{\ln (2n)}{T \alpha} +  \alpha \\
        \frac{x_i}{f_i} + \sum_{S_j \ni i} \frac{z_j}{f_i} & \leq 1 + O(\epsilon).
    \end{align*}

    The last inequality follows from choosing $T = \Theta(1/\epsilon^2 \cdot \log n)$ and the fact that $\alpha = \epsilon \ln(2)$; furthermore, recall that the final solution $x_i = \frac{1}{T} \sum_{t=1}^T x_i^{(t)}$ and $z_j = \frac{1}{T} \sum_{t=1}^T z_j^{(t)}$. Thus, the output of the algorithm satisfies the desired properties.

\end{proof}

\bibliographystyle{plain}
\bibliography{ref}

\appendix

\section{Omitted Proofs} \label{sec:omitted-proofs}

\begin{proof}[Proof of Lemma \ref{lem:rounding}]
    Interpret $\{y_1/k, y_2/k, \ldots, y_m/k\}$ as the probabilities for the sets $S_1, S_2, \ldots, S_m$. The central machine samples $k$ sets independently according to this distribution. In expectation, we cover at least $(1 - 1/e)L$ elements (see \cite{Steurer}). Hence, in expectation, the number of uncovered elements is at most $L/e$. By Markov's inequality, the probability that the number of uncovered elements is more than $(1 + \epsilon)L/e$ is at most $1/(1 + \epsilon) \leq e^{-\epsilon/2}$. So, if we do this $\Theta(1/\epsilon \cdot \log m)$ times, the probability that the best solution covers fewer than $(1 - 1/e - 1/(1 + \epsilon))L$ elements is at most 
    \[
        \exp{-\epsilon/2 \cdot \Theta(1/\epsilon \cdot \log m)} \leq 1/\poly(m).
    \]

    After the central machine forms $\Theta(1/\epsilon \cdot \log m)$ solutions (each solution consisting of at most $k$ sets) as described above, it broadcasts these solutions in batches of size $1/\epsilon$ to all other machines. Note that each batch contains $O(\log m)$ solutions.

    For each batch, each machine receives a message of size $\tO{k}$. For each of these $O(\log m)$ solutions in the batch, the machines can compute its coverage as follows. Let $v_j$ be the characteristic vector of the set $S_j$ that was chosen. We can aggregate the vectors $v_j$ that are in the solution in $O(\log m)$ rounds and count the number of non-zero entries using a binary broadcast tree. We do this in parallel for all solutions in the batch. Finally, we repeat this process for all $1/\epsilon$ batches.
\end{proof}



    



\end{document}